\newtheorem{thm}{Theorem}
\newtheorem{lem}{Lemma}
\theoremstyle{remark}
\newtheorem{defn}{Definition}
\newtheorem{remark}{Remark}
\newcommand{\R}{\mathbb{R}}
\newcommand{\G}{\bm{G}}
\title{\LARGE \bf
Robustly Invertible Nonlinear Dynamics and the BiLipREN: Contracting Neural Models with Contracting Inverses
}
\author{Yurui Zhang, Ruigang Wang and Ian R. Manchester
\thanks{This work was supported in part by the Australian Research Council.}
\thanks{The authors are with the Australian Centre for Robotics (ACFR), and the School of Aerospace, Mechanical and Mechatronic Engineering, The University of Sydney, Sydney, NSW 2006, Australia {\tt\small ian.manchester@sydney.edu.au}}%
}
\begin{document}

\maketitle
\thispagestyle{empty}
\pagestyle{empty}


\begin{abstract}
We study the invertibility of nonlinear dynamical systems from the perspective of contraction and incremental stability analysis and propose a new invertible recurrent neural model: the BiLipREN. In particular, we consider a nonlinear state space model to be robustly invertible if an inverse exists with a state space realisation, and both the forward model and its inverse are contracting, i.e. incrementally exponentially stable, and Lipschitz, i.e. have bounded incremental gain. This property of bi-Lipschitzness implies both robustness in the sense of sensitivity to input perturbations, as well as robust distinguishability of different inputs from their corresponding outputs, i.e. the inverse model robustly reconstructs the input sequence despite small perturbations to the initial conditions and measured output. Building on this foundation, we propose a parameterization of neural dynamic models: bi-Lipschitz recurrent equilibrium networks (biLipREN), which are robustly invertible by construction. Moreover, biLipRENs can be composed with orthogonal linear systems to construct more general bi-Lipschitz dynamic models, e.g., a nonlinear analogue of minimum-phase/all-pass (inner/outer) factorization. We illustrate the utility of our proposed approach with numerical examples.

\end{abstract}

\section{Introduction}

The study of dynamic system invertibility plays a fundamental role in control theory and applications. In the absence of any uncertainty, an ideal inverse of an input-output map provides a perfect feedforward controller, while in the feedback setting robust  control was famously characterised by Zames in terms of existence of an approximate inverse \cite{zames1981feedback}. Many practical techniques in control design and related areas are based in some way on dynamic system inversion. System inverses are used for feedforward control in precision motion control (e.g. \cite{markusson2001iterative, rigney2009nonminimum, van2018inversion, bolderman2024physics}) and for digital predistortion compensation in electronics (e.g. \cite{morgan2006generalized, ghannouchi2009behavioral, liu2006augmented, tanovic2018equivalent}). System inversion is used a component in feedback schemes such as nonlinear dynamic inversion for flight control (e.g. \cite{reiner1996flight, miller2011nonlinear, sieberling2010robust, wang2017robust}) and internal model control schemes which have been widely applied in the process industries (e.g. \cite{garcia1982internal, economou1986internal}). Other works such as \cite{celani2010output, wang2017robust} developed robust output feedback regulators for certain classes of invertible nonlinear MIMO systems.

In this work, we propose a definition of \emph{robust invertibility} for a nonlinear dynamic system, which roughly speaking means that a causal inverse exists and both the system and its inverse have guaranteed internal stability and bounded input-output behavior. Internal stability is characterized in terms of \emph{contraction} \cite{lohmiller1998contraction, FB-CTDS}, a global stability notion for nonlinear system that ensures all trajectories converge to each other. The bounded input-output behavior for both forward and inverse system is characterized by \emph{bi-Lipschitz} properties, meaning small changes in input lead to small changes in output, and different outputs correspond to robustly distinguishable inputs. These properties guarantee the inverse system can robustly reconstructs the the input sequences from output sequences with the presence of disturbance and different initial states.

\subsection{Inverses of Nonlinear Dynamical Systems}
Research on system inverses originated with the inverse of a linear system. A continuous-time linear single-input single-output (SISO) system is said to be minimum-phase if all the zeros of its transfer function lie in the open left half of the complex plane \cite{moylan1977stable}. A minimum-phase system is stable if and only if all its poles and zeros are in the left half-plane. For nonlinear systems, the concept of minimum-phase is defined through the internal dynamics. The zero dynamics refer to the internal dynamics of the system when the input is chosen such that the output remains identically zero. A nonlinear system is minimum-phase if its zero dynamics are (globally) asymptotically stable \cite{byrnes1988local}. 

For non-minimum-phase systems the inverse is unstable, but factorizations based on the classical linear minimum-phase/all-pass a.k.a. inner-outer factorizations can be developed \cite{ball1992inner, van2018l2}, and approximate inversion can be achieved e.g. via finite horizon non-causal preview \cite{zou2007precision} or pseudo-inverse methods \cite{romagnoli2019general}. Such factorizations are useful in control schemes generalizing the classical Smith predictor and internal model control \cite{laughlin1987smith, van2018l2}.

In much of the literature, invertibility of a nonlinear system is defined under the assumption of a well-defined \emph{relative degree} \cite{hirschorn1979invertibility, devasia1996nonlinear} and asymptotically stable zero dynamics. Roughly speaking, the relative degree at a point is the number of times the output must be differentiated (or delayed in discrete time) before the input appears explicitly. Associated with this process is a \emph{normal form}, i.e. a coordinate transformation that introduces separates system into nonlinear internal dynamics and a linear input-output component construct via partial feedback linearization \cite{liberzon2002output}. However, the restriction to systems of a fixed vector relative degree and the requirement of a normal form representation can be limiting in many applications.

Furthermore, while many existing approaches to study stability of the inversion in terms of asymptotically stable zero dynamics or input-to-state stability (ISS), they do not explicitly account for robustness of the inverse in the sense of amplification of signal perturbations. To our knowledge, \cite{liberzon2004output, liberzon2002output} introduced the concept of output–input stability, i.e. inputs are bounded by functions of outputs, but this requires a normal form construction and does not imply incremental stability as studied in this paper. 

\subsection{Invertible neural networks}

In the machine learning literature, compositions of invertible static maps known as normalizing flows \cite{papamakarios2021normalizing}  have received significant attention for modeling complex probability distributions. A normalizing flow maps a complex data distributions to a simple distribution such as a Gaussian, which can then be inverted as a generative model for tasks motion generation \cite{luo2024potential}, imitation learning \cite{urain2020imitationflow}, time series forecasting \cite{de2020normalizing}, and nonlinear dynamics modeling \cite{drygala2022generative} in robotics. As more applications involve sequence-to-sequence mapping, dynamic invertible systems are well-suited for scenarios where the output depends on inputs over time, which is usually concerned in robotics. In the literature, the invertible residual layer $F(x) = x+H(x)$ is closely relate to our model, where the nonlinear block $H$ is a network with Lipschitz bound less than $1$ \cite{chen2019residual}. In \cite{behrmann2021understanding}, bi-Lipschitzness is introduce to mitigate the exploding inverse problem of invertible neural networks. In \cite{wangmonotone}, a strongly monotone  and Lipschitz residual layer is constructed with certified bound. A bi-Lipschitz network structure (biLipNet) is obtained based on the composition of these layers with orthogonal affine layers. 

The present paper extends the biLipNet concept to dynamical systems. In contrast to much of the literature on nonlinear system inversion, rather than construct an inverse of a given system, we propose to learn dynamical models which are robstly invertible by construction.

\subsection{Contributions}
    \begin{itemize}
        \item  We introduce a definition of robust invertibility of nonlinear dynamics from the perspective of contraction and bi-Lipschitzness, which the system can be robustly inverted with the existence of disturbance and difference in initial conditions. 
        \item We explicitly describe a bi-Lipschitz dynamical system using strong input-output monotonicity (a special case of bi-Lipschitzness). By composing such models in layers with static/dynamic linear orthogonal layers, we obtain more general bi-Lipschitz model representations. 
        \item We build on our previous work \cite{wangmonotone} and \cite{revay2024recurrent} proposing the bi-Lipschitz recurrent equilibrium networks (biLipREN) with guaranteed robust invertibility. The proposed model is effective in dynamic inversion, and generative models, which generalises the idea of minimum-phase linear systems, and are easy to control by inversion.
    \end{itemize}

{\bf Notation.} Let $\mathbb{N}$ and $\mathbb{R}$ be the set of natural and real numbers, respectively. We denote the set of sequences $x:\mathbb{N}\rightarrow \mathbb{R}^n$ by $\ell^n$. We use $\ell_2^n\subset \ell^n$ to denote the set of sequences with finite $\ell_2$ norm, i.e., $\|x\|=\sqrt{\sum_{t=0}^{\infty} |x_t|^2}<\infty$ where $|\cdot|$ is the Euclidean norm. We use $\|x\|_T:=\sqrt{\sum_{t=0}^{T}|x_t|^2}$ to denote the truncated norm of $x\in \ell^n$ over $T$ with $T\in \mathbb{N}$. 

\section{Preliminaries}

We consider the nonlinear state-space systems of the form
\begin{equation} \label{eq:system}
    x_{t+1}=f(x_t,u_t),\quad y_t=h(x_t,u_t)
\end{equation}
In the above, $x_t\in\R^n $, $ u_t, y_t\in \R^m$ are the model state, input and output respectively. 
If the output mapping $h_\theta$ is invertible w.r.t. its second argument, we can construct the causal state space realization of the inverse of \eqref{eq:system} as follows
\begin{equation}\label{eq:system-inv}
x_{t+1}=f(x_t,h^{-1}(x_t,y_t)),\quad u_t=h^{-1}(x_t,y_t)
\end{equation}
where $h^{-1}:\R^n\times \R^m\rightarrow\R^m$ satisfies
\begin{equation}
    h(x,h^{-1}(x,y))=y,\quad h^{-1}(x,h(x,u))=u,\; \forall x,u,y.
\end{equation}
A stable invertible system does not necessarily imply its inverse is stable. The particular form of stability we consider in this paper contraction:
\begin{defn}
    System \eqref{eq:system} is said to be \emph{contracting} if for two copies of the system with different initial conditions $a,b\in\mathbb{R}^{n}$, but the same input sequence $u\in \ell^m$, the corresponding state sequences $x^a,x^b$ satisfy 
    \begin{equation}\label{eq:contracting}
        |x_t^a-x_t^b|\leq \kappa \alpha^t |a-b|, \quad \forall t\in \mathbb{N}
    \end{equation}
    for some $\kappa>0$ and $\alpha \in [0,1)$.
\end{defn}

Let $\G:\ell^m\mapsto \ell^m$ be an operator from $u$ to $y$. To discuss causality we define the past projection operator
$\bm{P}_{t}$:
\begin{equation}
    (\bm{P}_{T} u)(\tau) = 
\begin{cases}
u_\tau, & \tau \leq T \\
0, & \tau > T
\end{cases}
\end{equation}
\begin{defn}
An operator $\G:\ell^m\mapsto\ell^n$ is causal if for any input sequence $u\in \ell^m$
\begin{equation}
    \bm{P}_T\G(\bm{P}_T u)=\bm{P}_T\G(u)\quad \forall T\in \mathbb{N}
\end{equation}
\end{defn}
\begin{defn}\label{defn:inv}
    An operator $\G:\ell^m\mapsto \ell^m$ is invertible if there exists an operator $\G^{-1}:\ell^m\mapsto \ell^m$  for any initial state $a\in \R^n$, such that 
    \begin{equation}\label{eq:inv-opt}
        \G^{-1}(\G(u))=u,\quad \G(\G^{-1}(y))=y
    \end{equation}
    for all $u,y\in \ell^m$.
\end{defn}

\begin{defn}
    A causal operator is said to be \emph{$\nu$-Lipschitz} with $\nu>0$ if
    \begin{equation}\label{eq:lipschitz}
        \|\G(u)-\G(v)\|_T\leq \nu \|u-v\|_T,\quad \forall T\in \mathbb{N}
    \end{equation}
    for all input sequences $u,v\in \ell^m$. Similarly, it is said to be \emph{$\mu$-inverse Lipschitz} with $\mu>0$ if
    \begin{equation}
        \|\G(u)-\G(v)\|_T\geq \mu \|u-v\|_T,\quad \forall T\in \mathbb{N}
    \end{equation}
    for all $a\in \R^n$ and $u,v\in \ell^m$. It is said to be \emph{bi-Lipschitz} with $\nu\geq \mu>0$, or simply $(\mu,\nu)$-biLipschitz, if it is $\mu$-inverse Lipschitz and $\nu$-Lipschitz. The ratio $\tfrac{\nu}{\mu}$ is referred to as a \textit{distortion bound}.
\end{defn}

\begin{remark}An invertible matrix is bi-Lipschitz with $\mu, \nu$ given by the smallest and largest singular values, respectively. The condition number corresponds to the distortion bound $\tfrac{\nu}{\mu}$.
\end{remark}

We frequently consider the operator induced by the system \eqref{eq:system} starting from initial conditions $x_0=a$, which we denote by $\G_a$.

\section{Robustly Invertible Dynamical Models}
In this section, we introduce the definition for \emph{robust invertibility} and explicitly derive the error bounds for a robustly invertible system in terms of contraction and bi-Lipschitzness. We introduce \textit{strong input-output monotonicity} as a special class of bi-Lipschitz systems. By composing strongly monotone models, we develop more general families bi-Lipschitz dynamic models.

\subsection{Robust Invertibility}
In many applications, it is useful to construct a model that allows us to robustly reconstruct the input sequence from the disturbed output, e.g. measurement noise, sensor noise or external disturbance. To be specific, let us consider the problem of recovering $u$ from $\tilde{y}=\G_a(u+\delta_u)$ where $a\in \R^n$ and $\delta_u\in \ell^m$ are the unknown initial state and input perturbation of the system \eqref{eq:system}, respectively. 
We formally define \emph{robust invertibility} in the following sense
\begin{defn}\label{def:r-invert}
    A system \eqref{eq:system} is said to be robustly invertible with $\alpha_{u}, \beta_{u},\alpha_{y},\beta_{y}>0$, if
    \begin{equation}\label{eq:robust-invert}
        \begin{split}
       &\|\G_b^{-1}(\G_a(u+\delta_u))-u\|_T\leq \alpha_{u}|a-b|+\beta_{u}\|\delta_u\|_T,\\
       &\|\G_a(\G_b^{-1}(y+\delta y))-y\|_T\leq\alpha_{y}|a-b|+\beta_{y}\|\delta_y\|_T
   \end{split}
    \end{equation}
   for any input sequences $u,\delta_u\in \ell^m$ and initial states $a,b\in \R^n$ and $\forall T\in \mathbb{N}$. $\G_a, \G_b^{-1}$ are the operators of \eqref{eq:system} and \eqref{eq:system-inv} with initial states $a,b$, respectively.
\end{defn}
We now relate robust invertibility to bi-Lipschitzness and contraction of a system operator $\G$ and its inverse $\G^{-1}$.
\begin{thm}
   Suppose that System \eqref{eq:system} is $(\mu,\nu)$-biLipschitz and contracting with rate $\alpha_1$ and overshoot $\kappa_1$, and its inverse is $(1/\nu,1/\mu)$-biLipschitz and contracting with rate $\alpha_2$ and overshoot $\kappa_2$. If the output mapping $h(x,u)$ is $(\gamma_1,\gamma_2)$-biLipschitz w.r.t. $x$, then for any input sequences $u,\delta_u\in \ell^m$, initial states $a,b\in \R^n$ and $\forall T\in \mathbb{N}$ we have 
   \begin{equation}\label{eq:error-bound}
   \begin{split}
       &\|\G_b^{-1}(\G_a(u+\delta_u)-u\|_T\leq \frac{\kappa_1\gamma_2}{\mu\sqrt{1-\alpha_1^2}}|a-b|+\frac{\nu}{\mu}\|\delta_u\|_T\\
       &\|\G_a(\G_b^{-1}(y+\delta_y)-y\|_T\leq \frac{\kappa_2\nu}{\gamma_1\sqrt{1-\alpha_2^2}}|a-b|+\frac{\nu}{\mu}\|\delta_y\|_T
   \end{split}
   \end{equation}
   i.e. System \eqref{eq:system} is robustly invertible.
\end{thm}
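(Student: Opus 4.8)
The plan is to prove each inequality in \eqref{eq:error-bound} by splitting the reconstruction error into an \emph{input-perturbation} contribution and an \emph{initial-state-mismatch} contribution via the triangle inequality, and then bounding each piece from the hypotheses. For the first inequality I would insert the term $\G_b^{-1}(\G_a(u))$ and write
\begin{equation*}
\|\G_b^{-1}(\G_a(u+\delta_u)) - u\|_T \leq \|\G_b^{-1}(\G_a(u+\delta_u)) - \G_b^{-1}(\G_a(u))\|_T + \|\G_b^{-1}(\G_a(u)) - u\|_T,
\end{equation*}
where the first term is the perturbation effect and the second is the effect of using the wrong initial state $b$ in the inverse.

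The perturbation term is handled by composing the two Lipschitz bounds: since $\G_a$ is $\nu$-Lipschitz and $\G_b^{-1}$ is $(1/\mu)$-Lipschitz (the upper constant of the $(1/\nu,1/\mu)$-biLipschitz inverse), it is at most $\tfrac{1}{\mu}\|\G_a(u+\delta_u)-\G_a(u)\|_T \leq \tfrac{\nu}{\mu}\|\delta_u\|_T$, which is exactly the second term of the claimed bound. The state-mismatch term is the crux. The key move is to invoke invertibility (Definition \ref{defn:inv}) with \emph{matching} initial states, $\G_b\circ\G_b^{-1}=\mathrm{id}$, together with the $\mu$-inverse-Lipschitz property of the forward system, to pull the error back to the output side:
\begin{equation*}
\|\G_b^{-1}(\G_a(u)) - u\|_T \leq \tfrac{1}{\mu}\|\G_b(\G_b^{-1}(\G_a(u))) - \G_b(u)\|_T = \tfrac{1}{\mu}\|\G_a(u)-\G_b(u)\|_T .
\end{equation*}
The right-hand side is the output discrepancy of the \emph{forward} system driven by the same input $u$ from the two initial states $a$ and $b$. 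Pointwise $\G_a(u)_t-\G_b(u)_t = h(x_t^a,u_t)-h(x_t^b,u_t)$, so the $\gamma_2$-Lipschitz bound of $h$ in $x$ gives $|\G_a(u)_t-\G_b(u)_t|\le \gamma_2|x_t^a-x_t^b|$, and contraction \eqref{eq:contracting} gives $|x_t^a-x_t^b|\le \kappa_1\alpha_1^t|a-b|$. Summing squares and using the geometric bound $\sum_{t=0}^{T}\alpha_1^{2t}\le \tfrac{1}{1-\alpha_1^2}$ yields $\|\G_a(u)-\G_b(u)\|_T \le \tfrac{\kappa_1\gamma_2}{\sqrt{1-\alpha_1^2}}|a-b|$; dividing by $\mu$ reproduces the first term.

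For the second inequality I would run the identical argument on the \emph{inverse} system, exploiting the symmetry of the hypotheses: it is contracting with $(\kappa_2,\alpha_2)$, its upper and inverse-Lipschitz constants are $1/\mu$ and $1/\nu$, and its output map is $u=h^{-1}(x,y)$. The only genuinely new ingredient — and the step I expect to be the main obstacle — is establishing that the inverse output map $x\mapsto h^{-1}(x,y)$ is Lipschitz in $x$ with constant $1/\gamma_1$, playing the role that $\gamma_2$ played above. This should follow from the biLipschitz structure of $h$: if $u=h^{-1}(x,y)$ and $u'=h^{-1}(x',y)$ then $h(x,u)=h(x',u')$, and relating the induced state- and input-variations through the lower bound $\gamma_1|x-x'|\le|h(x,u)-h(x',u)|$ inverts the constant to $1/\gamma_1$, exactly as the smallest singular value of a matrix becomes the reciprocal of the largest singular value of its inverse (cf. the Remark on condition number). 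With that constant in hand, the same inverse-Lipschitz-then-contract-then-sum computation — now using $1/\nu$, the inverse-Lipschitz constant of $\G^{-1}$, in place of $1/\mu$ — produces the coefficient $\tfrac{\kappa_2\nu}{\gamma_1\sqrt{1-\alpha_2^2}}$, while the perturbation term is again $\tfrac{\nu}{\mu}\|\delta_y\|_T$ by the same Lipschitz composition. Carefully tracking which initial state is matched in each invertibility cancellation is the one place where index errors are easy to make.
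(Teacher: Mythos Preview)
Your proposal follows the paper's proof essentially step for step: the same triangle-inequality split, the same Lipschitz composition giving $\tfrac{\nu}{\mu}\|\delta_u\|_T$ for the perturbation piece, and for the initial-state piece the same reduction to $\tfrac{1}{\mu}\|\G_a(u)-\G_b(u)\|_T$ (the paper writes $u=\G_b^{-1}(\G_b(u))$ and applies the $1/\mu$-Lipschitz bound on $\G_b^{-1}$, which is equivalent to your use of the $\mu$-inverse-Lipschitz bound on $\G_b$), followed by the $\gamma_2$ bound on $h$ and the geometric contraction sum. For the second inequality the paper simply states that it ``follows the same logic'' and does not derive the $1/\gamma_1$ constant, so your explicit discussion of that step already goes beyond the level of detail in the paper's own argument.
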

\begin{proof}
    First, applying triangle inequality to the first inequality we have
    \begin{equation}\label{eq:tri-1}
        \begin{split}
            \|\hat{u}-u\|_T&=\|\G_b^{-1}(\G_a(u+\delta_u))-u\|_T \\
            &\leq \|\G_b^{-1}(\G_a(u+\delta_u))-\G_b^{-1}(\G_a(u))\|\\
            &\quad+\|\G_b^{-1}(\G_a(u))-u\|.
        \end{split}
    \end{equation}
    where $\hat{u} = \G_b^{-1}(\G_a(u+\delta_u)-u$. Since \eqref{eq:system} is $(\mu,\nu)$-Lipschitz, then $\G_a$ is $(\mu,\nu)$-Lipschitz and $\G_b^{-1}$ is $(1/\nu, 1/\mu)$-Lipschitz for all $a,b$. This further implies 
    \begin{equation}\label{eq:tri-2}
        \begin{split}
            \|\G_b^{-1}(\G_a&(u+\delta_u))-\G_b^{-1}(\G_a(u))\| \\
            \leq & \frac{1}{\mu}\|\G_a(u+\delta u)-\G_a(u)\|_T\leq \frac{\nu}{\mu}\|\delta_u\|_T.
        \end{split}
    \end{equation}
    For the second term in \eqref{eq:tri-1}, we have
    \begin{equation*}
        \begin{split}
            \|\G_b^{-1}(\G_a(u))&-u\|_T
            =\|\G_b^{-1}(\G_a(u))-\G_b^{-1}(\G_b(u))\|_T\\
            \leq & \frac{1}{\mu}\|\G_a(u)-\G_b(u)\|_T \leq \frac{\gamma_2}{\mu}\left\|x^a-x^b\right\|_T
        \end{split}
    \end{equation*}
    where $x^a, x^b$ are the state trajectories of \eqref{eq:system} with the input $u$ and initial conditions $a,b$, respectively. Since \eqref{eq:system} is contracting with rate $\alpha$ and overshoot $\kappa$, we have
    \begin{equation*}
        \left\|x^a-x^b\right\|_T\leq \sqrt{\sum_{t=0}^T\kappa_1^2\alpha_1^{2t}|a-b|^2}\leq \frac{\kappa_1}{\sqrt{1-\alpha_1^2}}|a-b|,
    \end{equation*}
    which further implies that 
    \begin{equation}\label{eq:tri-3}
        \|\G_b^{-1}(\G_a(u))-u\|_T\leq \frac{\kappa_1\gamma_2}{\mu\sqrt{1-\alpha_1^2}}|a-b|.
    \end{equation}
    Finally, \eqref{eq:error-bound} follows by \eqref{eq:tri-1} - \eqref{eq:tri-3}.
    The second inequality follows the same logic.
    
\end{proof}

\subsection{Strong Input-output Monotonicity}
We introduce the strong input-output monotonicity which will be used to formulate the basic building block of bi-Lipschitz systems.
\begin{defn}\label{dfn:monotone}
    System \eqref{eq:system} is said to be ($\xi, \rho$)-\emph{strongly input-output monotone} with $\xi, \rho>0$ if for all $u,v\in\ell^m$ we have
    \begin{equation}\label{eq:in_mono}
       2 \langle \Delta y , \Delta u\rangle_T \geq \xi\|\Delta u\|^2_T+\rho\|\Delta y \|^2_T
    \end{equation}
    where $\Delta y =\G_a(u)-\G_a(v)$, $\Delta u = u-v$.
\end{defn}
We show that if an operator satisfies the strong input-output monotonicity, then it is bi-Lipschitz, as depicted in Fig.~\ref{fig:lipschitz}.
\begin{lem}\label{lem:iqc}
        If an operator $\G:\ell^m\mapsto\ell^m$ satisfies \eqref{eq:in_mono} with
        \begin{equation}\label{eq:monotone}
            \xi = \frac{2\mu\nu}{\mu+\nu}, \quad \rho = \frac{2}{\mu+\nu}
        \end{equation}
        then it is $(\mu,\nu)$-biLipschitz.
\end{lem}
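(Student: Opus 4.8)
The plan is to collapse the monotonicity inequality \eqref{eq:in_mono} into a scalar quadratic inequality in the two truncated norms and then factor it using the prescribed values of $\xi$ and $\rho$. Fix arbitrary inputs $u,v\in\ell^m$ and a horizon $T\in\mathbb{N}$, and abbreviate $p=\|\Delta u\|_T$ and $q=\|\Delta y\|_T$, where $\Delta u=u-v$ and $\Delta y=\G(u)-\G(v)$ as in \eqref{eq:in_mono}. The goal is then exactly to show $\mu p\le q\le\nu p$, which is the definition of $(\mu,\nu)$-biLipschitzness once read back across all $u,v,T$.

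First I would bound the cross term by the Cauchy--Schwarz inequality for the truncated inner product, $\langle\Delta y,\Delta u\rangle_T\le\|\Delta y\|_T\|\Delta u\|_T=qp$. Substituting this into \eqref{eq:in_mono} converts the vector hypothesis into the purely scalar inequality $\xi p^2+\rho q^2\le 2pq$, i.e. $\rho q^2-2pq+\xi p^2\le 0$. This step discards nothing essential: all of the operator structure has been absorbed into the two nonnegative scalars $p$ and $q$, and it is valid uniformly in $u,v,T$ since \eqref{eq:in_mono} is assumed to hold for all of them.

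The key step is to insert the constants $\xi=\tfrac{2\mu\nu}{\mu+\nu}$ and $\rho=\tfrac{2}{\mu+\nu}$ from \eqref{eq:monotone} and divide through by $\rho>0$. Using $\tfrac{2}{\rho}=\mu+\nu$ and $\tfrac{\xi}{\rho}=\mu\nu$, the inequality becomes $q^2-(\mu+\nu)pq+\mu\nu p^2\le 0$, which factors exactly as $(q-\mu p)(q-\nu p)\le 0$. Because $0<\mu\le\nu$ and $p\ge 0$, such a product is nonpositive precisely when $\mu p\le q\le\nu p$. Rewriting in terms of norms yields $\mu\|\Delta u\|_T\le\|\Delta y\|_T\le\nu\|\Delta u\|_T$, establishing simultaneously the $\mu$-inverse-Lipschitz and $\nu$-Lipschitz bounds.

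There is no serious obstacle here; the proof is essentially a one-variable quadratic inequality once Cauchy--Schwarz is applied. The only point requiring care is to recognise that the two constants in \eqref{eq:monotone} are tuned exactly so that the cross and square terms combine into a perfect factorable quadratic with roots $\mu p$ and $\nu p$, and that dividing by $\rho$ (rather than completing the square) exposes those roots directly. As a consistency check one confirms the discriminant is nonnegative, equivalently $\rho\xi=\tfrac{4\mu\nu}{(\mu+\nu)^2}\le 1$, which is just $(\mu-\nu)^2\ge 0$, so the feasible interval $[\mu p,\nu p]$ is genuinely nonempty, consistent with $\mu\le\nu$.
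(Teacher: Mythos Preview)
Your proof is correct and follows essentially the same route as the paper: apply Cauchy--Schwarz to the truncated inner product, reduce to a scalar quadratic in $\|\Delta u\|_T$ and $\|\Delta y\|_T$, and factor it as $(\|\Delta y\|_T-\mu\|\Delta u\|_T)(\|\Delta y\|_T-\nu\|\Delta u\|_T)\le 0$ using the specific values of $\xi$ and $\rho$. The only cosmetic differences are that you apply Cauchy--Schwarz before substituting the constants (the paper does it after) and introduce the shorthand $p,q$, neither of which changes the argument.
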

\begin{proof}
For any pair of input sequences $u, v\in \ell_m$, we define $\Delta u = u-v$ and $\Delta y=y-z$ where $y=\G_a(u)$ and $z=\G_a(v)$. From \eqref{eq:in_mono}, we can obtain
\begin{equation*}
    2 \langle \Delta y , \Delta u\rangle_T\geq \frac{2}{\mu+\nu} (\mu\nu\|\Delta u_t\|^2+\|\Delta y_t\|^2), \;\forall T\in \mathbb{N}.
\end{equation*}
By Cauchy-Schwarz inequality, we have
\begin{equation*}
    (\mu+\nu)\|\Delta y\|_T\|\Delta u\|_T\geq \mu\nu \|\Delta u\|_T^2+\|\Delta y\|_T^2),
\end{equation*}
which further implies
\begin{equation}
    (\|\Delta y\|_T-\mu \|\Delta u\|_T)(\nu\|\Delta u\|_T-\|\Delta y\|_T)\geq 0.
\end{equation}
Thus, $\G$ is $(\mu,\nu)$-biLipschitz.
\end{proof}
\begin{figure}[!bt]
\centering
    \includegraphics[width=0.7\linewidth]{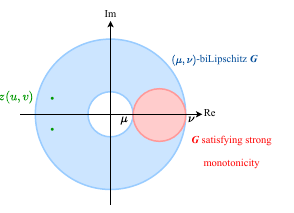}
    \caption{The Scaled Relative Graph \cite{chaffey2023graphical} is defined as a set of complex numbers with 
 $z(u, v) := \left\{ \frac{\| y - z \|}{\| u - v \|}
e^{\pm j \angle (u - v, y - z)}
\;\middle|\;
y = \G(u), z = \G(v) \right\}$, where $\angle (u, y) := \text{acos} {(\operatorname{Re} \langle u , y \rangle}{\|u\|^{-1} \|y\|^{-1})} \in [0, \pi]$, illustrated with green dots. This SRG depicts the incremental input-output properties of a $(\mu,\nu)$-biLipschitz operator $\G$. The ring (blue area) is for all $(\mu,\nu)$-biLipschitz operator $\G$ while the small circle (red area) is for the operator $\G$ satisfying the strong input-output monotonicity condition \eqref{eq:in_mono}.}\label{fig:lipschitz} 
\end{figure}

\subsection{Bi-Lipschitz Dynamic Models}
As shown in the previous section, strong input-output monotonicity is a special case of bi-Lipschitzness. However   monotonicity and bi-Lipschitzness differ in  their composition behavior. Given a $(\mu_1,\nu_1)$-biLipschitz operator $\G_1$, and $(\mu_2,\nu_2)$-biLipschitz operator $\G_2$, their composition $\G=\G_1\circ\G_2$ is also $(\mu_1\mu_2,\nu_1\nu_2)$-biLipschitz. However, the composition of two strongly monotone $\G_1, \G_2$ does not need to be strongly monotone, while it is still biLipschitz.

A static orthogonal layer \cite{trockman2021orthogonalizing} is an affine map realized by $\bm{O}_k=P_ku+q_k$ with $P_k^\top P_k =I$ and $q_k$ is the bias. $\bm{O}_k$ has an explicit inverse $\bm{O}_k^{-1} = P_k^\top(y-q_k)$ .

By composing strongly input-output monotone models with affine orthogonal layers, we can obtain more general biLipschitz dynamics 
\begin{equation}
    \bm{F}(u) = \bm{O}_{K+1}\circ\G_K\circ\bm{O}_{K}\circ\cdots\circ\bm{O}_{2}\circ\G_{1}\circ\bm{O}_1(u)
\end{equation}
where $\bm{O}_k$ is the $k\text{th}$ static orthogonal layer and $\G_k$ is the $k\text{th}$ strongly input-output monotone dynamic models. By the composition rule, the above general dynamic model is $(\mu,\nu )$-biLipschitz with $\mu = \Pi^K_{k=1}\mu_k$ and  $\nu = \Pi^K_{k=1}\nu_k$.  The composition of orthogonal layers and bi-Lipschitz models improves the model expressivity. The inverse of $\bm F$ can be calculated by passing backwards through the layers:
\begin{equation}
    \bm{F}^{-1}(y) = \bm{O}^{-1}_{1}\circ\G^{-1}_1\circ\cdots\circ\bm{O}^{-1}_{K}\circ\G^{-1}_{K+1}\circ\bm{O}^{-1}_K(y).
\end{equation}
\begin{remark}
    This kind of construction was studied for invertible static maps in \cite{wangmonotone}. As a basic example, any invertible matrix can be represented via its singular value decomposition $USV$, which is in the above form since $U$ and $V$ are orthogonal and $S$ is strongly input-output monotone, with $\mu, \nu$ the smallest and largest singular values.
\end{remark}

\subsection{Dynamic Orthogonal Layer with Anti-Causal Inverse}
A dynamic orthogonal layer $\bm{O}_k$ can be realized by an affine dynamic system in the form of \cite{heuberger2005modelling}
\begin{equation}\label{eq:dyn-orth}
    \bm{O}_k:\left\{\begin{aligned}
        h_{t+1} &= A_kh_t+B_ku_t+d_k\\
        y_t &=C_kh_t+D_ku_t+w_k
    \end{aligned}\right.
\end{equation}
with $h \in \mathbb{R}^p, u \in \mathbb{R}^m, y \in \mathbb{R}^m$, and $d_k\in\mathbb{R}^p,w_k\in\mathbb{R}^m$ are bias terms. $\bm{O}_k$ is orthogonal if and only if it has a state-space representation $A_k,B_k,C_k,D_k$ for which the block matrix
\begin{equation*}
Q_k=\begin{bmatrix}
        \begin{array}{c|c}
        A_k&B_k\\
        \hline
        C_k&D_k\\
    \end{array}
\end{bmatrix}
\end{equation*}
is orthogonal with $Q_k^\top Q_k=I$. The associated transfer function of System \eqref{eq:dyn-orth} is $H(z)= C_k(zI-A_k)^{-1}B_k+D_k$ with $H(z)H^\top(z^{-1})=I$, i.e. it is an \emph{all-pass filter}, see \cite{heuberger2005modelling} for further discussion.
 
The system $\bm{O}_k$ is a causal and stable, but its inverse of $\bm{O}_k$ in causal form is unstable. However $\bm{O}_k$ has an explicit anti-causal inverse:
\begin{equation}\label{eq:dyn-orth-inv}
      \bm{O}_k^{-1}:\left\{\begin{aligned}
        h_{t} &= A_k^\top( h_{t+1}-d_k)+C_k^\top (y_t-w_k)\\
        u_t &=B_k^\top (h_{t+1}-d_k)+D_k^\top (y_t-w_k)
    \end{aligned}\right.
\end{equation}
which is stable in reverse time \cite{heuberger2005modelling}. Hence this can be applied for robust inversion finite-length (batch) data.

\begin{remark}
    Both static and dynamic orthogonal layer can be parameterized by Cayley transformation 
    $Q =\text{Cayley}(J)= (I+Z)(I-Z)^{-1}$
where $Z = J^\top-J$, so that $Q^\top Q=I$ for any $J\in\mathbb{R}^{m+p\times m+p}$. This parameterization can be applied for unconstrained learning together with the direct parametrization of biLipREN.
\end{remark}

The composition of an all-pass filter and a stable invertible nonlinear system is a form of inner-outer factorization of a nonlinear system \cite{ball1992inner}, generalising the classical minimum-phase/all-pass factorization of linear systems. The input-output map of a nonlinear system $\bm{M}$ can be written as a composition of two operators $\bm{M}=\bm{\Sigma}\circ\bm{\Theta}$ where $\bm{\Theta}$ is a stable invertible nonlinear system and $\bm{\Sigma}$ is distance preserving, i.e. the distance between any two signals remains unchanged. Related forms of factorization have proven useful in many control designs, e.g., Smith predictor \cite{laughlin1987smith}, $H_\infty $ control \cite{ball1996j} and spectral factorization \cite{oara1998solutions}. 

\begin{remark}
In our proposed framework, the all-pass/inner factor is linear (in fact affine due to the bias terms), whereas in previous approaches to nonlinear inner-outer factorization the inner factor is energy-preserving ($\|y\|=\|u\|$) but may be nonlinear (e.g. \cite{ball1992inner, van2018l2}). It is reasonable to ask whether in the incremental setting the inner factor could also be nonlinear. However, the Mazur–Ulam theorem \cite{mazur1932transformations, nica2012mazur} states that on any real normed linear space (even infinite dimensional), every invertible distance-preserving map is affine. So in this sense, there is no loss of generality from restricting to affine models.
\end{remark}

\section{Bi-Lipschitz Recurrent Equilibrium Networks}
In this section we present a parameterization of recurrent neural models with guaranteed robust invertibility. More specifically, the neural network dynamical model takes the form of \emph{Recurrent Equilibrium Network} (REN) \cite{revay2024recurrent}. We will give conditions for a REN to be contracting and bi-Lipschitz. We then show that its inverse admits a REN realization which is also contracting and bi-Lipschitz.

\subsection{Recurrent Equilibrium Networks}
We parameterize \eqref{eq:system} in the form of recurrent equilibrium network (REN) \cite{revay2024recurrent}, which is a feedback interconnection of a linear time-invariant system and a static nonlinearity:
\begin{subequations}\label{eq:ren}
\begin{align}
\renewcommand\arraystretch{1.2}
\begin{bmatrix}
x_{t+1} \\ v_t \\ y_t
\end{bmatrix}&=
\overset{W}{\overbrace{
		\left[
		\begin{array}{c|cc}
		A & B_1 & B_2 \\ \hline 
		C_{1} & D_{11} & D_{12} \\
		C_{2} & D_{21} & D_{22}
		\end{array} 
		\right]
}}
\begin{bmatrix}
x_t \\ w_t \\ u_t
\end{bmatrix}+
\overset{b}{\overbrace{
		\begin{bmatrix}
		b_x \\ b_v \\ b_y
		\end{bmatrix}
}}, \label{eq:G}\\
w_t=\sigma(&v_t):=\begin{bmatrix}
\sigma(v_{t}^1) & \sigma(v_{t}^2) & \cdots & \sigma(v_{t}^q)
\end{bmatrix}^\top, \label{eq:sigma}
\end{align}   
\end{subequations}
where $x_t\in \R^n, u_t,y_t\in \R^m$ are the state, input and output, respectively. $v_t, w_t\in \R^q$ are the input-output of the activation layer $\sigma$, which is a fixed nonlinear scalar activation function with slope-restricted in $[0,1]$. The learnable parameter $\theta$ includes weight matrix $W$ and bias vector $b$. 
The algebraic loop in \eqref{eq:ren} introduces an \emph{equilibrium network} $\phi:(x_t,u_t)\rightarrow w_t$, where $w_t$ is the solution of the following implicit equation
\begin{equation}\label{eq:implicit}
w_t=\sigma(D_{11} w_t+b_w),
\end{equation}
with $b_w=C_1x_t+D_{12}u_t+b_v$. Thus, REN \eqref{eq:ren} can be rewritten as \eqref{eq:system} with 
\[
\begin{split}
    f_\theta(x,u)&=Ax+B_1\phi(x,u)+B_2u+b_x, \\
    h_\theta(x,u)&=C_2x+D_{21}\phi(x,u) + D_{22}u+b_y.
\end{split}
\]

One advantage of the equilibrium network \eqref{eq:implicit} is that it includes many existing feed-forward networks| e.g., multi-layer perception (MLP), residual network (ResNet) and convolutional neural network (CNN)| as special cases \cite{ghaoui2019implicit}. 

Due to the implicit structure in \eqref{eq:implicit}, we need pay additional attention to the \emph{well-posedness} of $\phi$. That is, for any $b_w\in \mathbb{R}^q$, Eq.~\eqref{eq:implicit} admits a unique solution $w_t\in \R^q$. In \cite{revay2020lipschitz} it was shown that if there exists a $\Lambda\in \mathbb{D}^+$ with $\mathbb{D}_+$ as the set of positive diagonal matrices such that 
\begin{equation}\label{eq:wellpose}
    2\Lambda-\Lambda  D_{11}- D_{11}^\top\Lambda\succ0,
\end{equation}
then the equilibrium network \eqref{eq:implicit} is well-posed.

\subsection{BiLipREN and Its Contracting Inverse}
We introduce the explicit model of REN inverse by taking $u_t=D_{22}^{-1}(-C_2x_t-D_{21}w_t+y-b_y)$ and substitute it back to \eqref{eq:ren}. This inversion is well-defined if $D_{22}$ is non-singular. Then, the REN inverse can also be written as a REN:
\begin{equation}\label{eq:iren}
    \begin{split}
    \renewcommand\arraystretch{1.2}
\begin{bmatrix}
x_{t+1} \\ v_t \\ u_t
\end{bmatrix}&=
\overset{\hat W}{\overbrace{
		\left[
		\begin{array}{c|cc}
		\hat A & \hat B_1 & \hat B_2 \\ \hline 
		\hat C_{1} & \hat D_{11} & \hat D_{12} \\
		\hat C_{2} & \hat D_{21} & \hat D_{22}
		\end{array} 
		\right]
}}
\begin{bmatrix}
x_t \\ w_t \\ y_t
\end{bmatrix}+
\overset{\hat b}{\overbrace{
		\begin{bmatrix}
		\hat b_x \\ \hat b_v \\ \hat b_y
		\end{bmatrix}
}} \\
        w_t&=\sigma(v_t)
    \end{split}
\end{equation}
where the explicit parameter of REN inverse is 
\begin{equation}\label{eq:irenparam}
    \begin{split}
        &\hat{A}=A-B_2D_{22}^{-1}C_2,\; \hat{B}_1 = B_1-B_2D_{22}^{-1}D_{21}, \\
        &\hat{B}_2 = B_2D_{22}^{-1},\; \hat{C}_1 = C_1-D_{12}D_{22}^{-1}C_2\\
        & \hat{C}_2=-D_{22}^{-1}C_2,\;
        \hat{D}_{11}=D_{11}-D_{12}D_{22}^{-1}D_{21},\\
        &\hat{D}_{12}=D_{12}D_{22}^{-1},\;\hat{D}_{21}=-D_{22}^{-1}D_{21},\; \hat{D}_{22}=D_{22}^{-1}\\
        &\hat{b}_x = b_x-B_2D_{22}^{-1}b_y,\; \hat{b}_v = b_v-D_{12}D_{22}^{-1}b_y,\\
        &\hat{b}_y=-D_{22}^{-1}b_y.
    \end{split}
\end{equation}
It is easy to verify that the parametrization of REN inverse is a coordinate transformation $\hat{W}= W\Psi$ and $\hat{b}= b\Psi$ with 
\begin{equation}\label{eq:Psi}
    \Psi=\begin{bmatrix}
        I & 0 & 0 \\
        0 & I & 0 \\
        -D_{22}^{-1}C_2 & -D_{22}^{-1}D_{21} & D_{22}^{-1}
    \end{bmatrix}.
\end{equation}
We now state our main theoretical result as follows. 
\begin{thm}\label{thm:main}
Suppose that there exist $P = P^\top\succ 0$ and $\Lambda \in \mathbb{D}^+$ satisfying the following matrix inequality
\begin{align}
   &\begin{bmatrix}\nonumber
       P& -C_1^\top \Lambda&C_2^\top \\
       -\Lambda C_1 & \mathcal{W} &D_{21}^\top - \Lambda D_{12}\\
       C_2&D_{21}- D_{12}^\top\Lambda &-\frac{2}{\mu+\nu} I+D_{22}+D_{22}^\top.
   \end{bmatrix} \\
   &\; -
   \begin{bmatrix}
       A^\top\\
       B_1^\top\\
       B_2^\top 
   \end{bmatrix}P\begin{bmatrix}
       A^\top\\
       B_1^\top\\
       B_2^\top 
   \end{bmatrix}^\top-\frac{2\mu\nu}{\mu+\nu}\begin{bmatrix}
       C_2^\top\\
       D_{21}^\top\\
       D_{22}^\top
   \end{bmatrix}\begin{bmatrix}
       C_2^\top\\
       D_{21}^\top\\
       D_{22}^\top
       \end{bmatrix}^\top\succ 0 \label{eq:passivity}
    \end{align}
where $\mathcal{W}=2\Lambda -\Lambda D_{11}-D_{11}^\top \Lambda$.
\begin{itemize}
    \item[1)] REN \eqref{eq:ren} is well-posed, contracting and $(\mu,\nu)$-biLipschitz.
    \item[2)] The REN inverse \eqref{eq:iren} is well-posed, contracting and $(1/\nu,1/\mu)$-biLipschitz.  
\end{itemize}
\end{thm}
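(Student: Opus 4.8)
The plan is to read the matrix inequality \eqref{eq:passivity} as an incremental dissipation inequality (an incremental IQC) and to obtain part~2) from part~1) by a congruence transformation, exploiting that the inverse \eqref{eq:iren} is itself a REN with $\hat W=W\Psi$. First I would settle well-posedness: since $\mathcal{W}=2\Lambda-\Lambda D_{11}-D_{11}^\top\Lambda$ sits in the central diagonal block of \eqref{eq:passivity}, positive-definiteness of the whole matrix forces its $(2,2)$ principal submatrix $\mathcal{W}-B_1^\top P B_1-\tfrac{2\mu\nu}{\mu+\nu}D_{21}^\top D_{21}$ to be positive definite, whence $\mathcal{W}\succ0$; this is exactly \eqref{eq:wellpose}, so $\phi$ is well-posed and trajectories are uniquely defined. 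Next, for two copies of \eqref{eq:ren} driven by inputs $u,v$ from a common initial state, I introduce the incremental signals $\Delta x,\Delta w,\Delta v,\Delta u,\Delta y$, which satisfy the same affine relations as \eqref{eq:ren} with the bias cancelled, and set $z_t=(\Delta x_t,\Delta w_t,\Delta u_t)$ and $V_t=\Delta x_t^\top P\Delta x_t$.

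The core computation is to verify by direct expansion that $z_t^\top\Xi z_t = V_t - V_{t+1} + s_t - 2\Delta w_t^\top\Lambda(\Delta v_t-\Delta w_t)$, where $\Xi$ is the left-hand side of \eqref{eq:passivity}, $s_t=2\langle\Delta y_t,\Delta u_t\rangle-\xi|\Delta u_t|^2-\rho|\Delta y_t|^2$ is the per-step supply rate with $\xi,\rho$ as in \eqref{eq:monotone}, and I have used $\Delta x_{t+1}=[A\ B_1\ B_2]z_t$ and $\Delta y_t=[C_2\ D_{21}\ D_{22}]z_t$. The slope restriction of $\sigma$ in $[0,1]$ gives the incremental quadratic constraint $\Delta w_t^\top\Lambda(\Delta v_t-\Delta w_t)\geq0$, so $\Xi\succ0$ yields $V_{t+1}-V_t\leq s_t$. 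Summing from $0$ to $T$ with $V_0=0$ and $V_{T+1}\geq0$ gives the strong input--output monotonicity \eqref{eq:in_mono}, and Lemma~\ref{lem:iqc} then delivers the bi-Lipschitz bound. For contraction I set $\Delta u\equiv0$, so that $s_t\leq0$ and the strictness $\Xi\succeq\eta_0 I$ for some $\eta_0>0$ gives $V_{t+1}\leq(1-\eta)V_t$ with $\eta=\eta_0/\lambda_{\max}(P)\in(0,1)$, hence \eqref{eq:contracting} with $\kappa=\sqrt{\lambda_{\max}(P)/\lambda_{\min}(P)}$ and $\alpha=\sqrt{1-\eta}$.

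For part~2) the key observation is that \eqref{eq:iren} is a REN sharing the same state recursion and the same internal map $w=\sigma(v)$ (one checks $\hat v=v$ using \eqref{eq:irenparam}), with $[\hat A\ \hat B_1\ \hat B_2]=[A\ B_1\ B_2]\Psi$ and $[\hat C_2\ \hat D_{21}\ \hat D_{22}]\hat z=\Delta u$, where $\Psi$ of \eqref{eq:Psi} maps $\hat z=(\Delta x,\Delta w,\Delta y)$ to $z=\Psi\hat z$. I would therefore form $\Psi^\top\Xi\Psi$: by the decomposition above this is precisely the left-hand side of \eqref{eq:passivity} built from the inverse parameters, with $\hat P=P$, $\hat\Lambda=\Lambda$, and the supply-rate constants exchanged (i.e.\ $\mu,\nu$ replaced by $1/\nu,1/\mu$). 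Since $\Psi$ is invertible, $\Psi^\top\Xi\Psi\succ0$, so the inverse REN satisfies an inequality of the same form and part~1), applied to it, yields well-posedness, contraction and $(1/\nu,1/\mu)$-bi-Lipschitzness.

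I expect the main obstacle to be the bookkeeping in part~2): verifying that $\Psi^\top\Xi\Psi$ reproduces exactly the inverse-parameter version of \eqref{eq:passivity} with the supply rate correctly transposed, which hinges on confirming that the internal map and the slope multiplier $\Lambda$ are preserved under inversion, so that the same $\Lambda$ certifies $\hat{\mathcal{W}}\succ0$ and hence well-posedness of \eqref{eq:iren}. A secondary point requiring care is tracking which of $\mu,\nu$ labels the Lipschitz and the inverse-Lipschitz bound, since the input/output roles in $s_t$ swap between the forward and inverse operators; throughout, invertibility of $D_{22}$ is assumed so that \eqref{eq:iren} is well defined.
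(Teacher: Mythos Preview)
Your approach is essentially the paper's: part~1) via the incremental dissipation identity (the paper simply cites \cite[Thm.~1]{revay2024recurrent} for this step, whereas you spell it out) followed by Lemma~\ref{lem:iqc}, and part~2) via the congruence $\Psi^\top\Xi\Psi\succ0$ with the same $(P,\Lambda)$ and the supply-rate constants swapped to $(\beta,\gamma)=(1/\nu,1/\mu)$. One small gap: you write that ``invertibility of $D_{22}$ is assumed,'' but this is not a hypothesis of the theorem and must be \emph{derived} from \eqref{eq:passivity}. The paper does this first in part~2): the $(3,3)$ principal block of \eqref{eq:passivity} yields $D_{22}+D_{22}^\top\succ\tfrac{2}{\mu+\nu}I$, so $D_{22}$ has positive-definite symmetric part and is nonsingular; only then are $\Psi$ and the inverse REN \eqref{eq:iren} well defined and your congruence argument can proceed.
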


\begin{remark}
    By applying the model direct parameterization method in \cite{revay2024recurrent} with particular choice of supply rate matrices using \eqref{eq:monotone}, we can construct a mapping $\mathcal{M}:\theta \rightarrow (W,b)$ such that Condition~\eqref{eq:passivity} holds for any $\theta \in \R^N$.
\end{remark}

\begin{proof}
For Claim 1), the well-posedness of \eqref{eq:ren} can be directly drawn from \eqref{eq:passivity} as $\mathcal{W}\succ 0$, i.e., \eqref{eq:wellpose} holds. Then, direct application of \cite[Thm.~1]{revay2024recurrent} gives us that \eqref{eq:ren} is contracting and satisfies the strong monotonicity in Definition \ref{dfn:monotone}. By Lemma~\ref{lem:iqc}, it is also $(\mu,\nu)$-biLipschitz.

We now begin to prove Claim 2). First, we show that $D_{22}$ is invertible since \eqref{eq:passivity} implies 
\[
D_{22}+D_{22}^\top \succeq \frac{2}{\mu+\nu}I.
\]
Thus, the matrix $\Psi$ in \eqref{eq:Psi} and $(\hat{W},\hat{b})$ are well-defined. By left- and right-multiplying \eqref{eq:passivity} with $\Psi^\top$ and $\Psi$, respectively, we can obtain
\begin{align}
   &\begin{bmatrix}\nonumber
       P& -\hat C_1^\top \Lambda & \hat C_2^\top \\
       -\Lambda \hat C_1 & \hat{\mathcal{W}} & \hat D_{21}^\top - \Lambda \hat D_{12}\\
       \hat C_2 & \hat D_{21}- \hat D_{12}^\top\Lambda &-\frac{2}{\beta+\gamma} I+\hat D_{22}+\hat D_{22}^\top.
   \end{bmatrix} \\
   &\; -
   \begin{bmatrix}
       \hat A^\top\\
       \hat B_1^\top\\
       \hat B_2^\top 
   \end{bmatrix}P\begin{bmatrix}
       \hat A^\top\\
       \hat B_1^\top\\
       \hat B_2^\top 
   \end{bmatrix}^\top-\frac{2\beta\gamma}{\beta+\gamma}\begin{bmatrix}
       \hat C_2^\top\\
       \hat D_{21}^\top\\
       \hat D_{22}^\top
   \end{bmatrix}\begin{bmatrix}
       \hat C_2^\top\\
       \hat D_{21}^\top\\
       \hat D_{22}^\top
       \end{bmatrix}^\top\succ 0 
\end{align}
where $\beta=1/\nu$, $\gamma=1/\mu$ and $\hat{\mathcal{W}}=2\Lambda -\Lambda \hat D_{11}-\hat D_{11}^\top \Lambda$. Again, by \cite[Thm.~1]{revay2024recurrent} we have that the REN inverse \eqref{eq:iren} is well-posed, contracting and $(\beta,\gamma)$-biLipschitz.
\end{proof}

\section{Experiments}
In this section we present experiments which illustrate the utility of our proposed models, system identification via inner-outer factorization, and robust inversion of an minimum phase nonlinear system.
\subsection{Nonlinear Inner-outer Factorization} 
In the first example, we consider learning the inner-outer factorization of a nonlinear system with time delay $\dot x(t) = 0.9\text{tanh}x(t)+u(t-1)$, which is a stable nonlinear system coupled with a time delay.
The inner-outer factorization is a composition of an all-pass filter and a stable invertible system which naturally fits our proposed approach with a composition of a dynamics orthogonal layer and a biLipREN as the inner and outer system respectively. We plot the open-loop response of the outer system and the original system under the Gaussian noise input to demonstrate the learned factorization in Figure \ref{fig:fact}. 

We compare the system response between the outer system, the composed system and the original system. The learned response of the composed system fits the original system, with a time delay of $1$s. The outer system, or the biLipREN recovers the system response of the minimum phase without the time delay. We also show the impulse response of the inner system, while the dynamic orthogonal layer models the $1$s time delay and the magnitude of the inner system is $1$, which does not amplify the outer system. 

The data is generated by simulating the system for $15$s and sampling $100$ batch data points with $100$ time steps under a Gaussian noise input. We fit the $(0.1,5)$-biLipschitz dynamic models to the data by minimizing the fitting error in $\ell_2$-norm $J = \|\bm{F}(u_k)-y_k\|^2$, where $u_k$ and $y_k$ are the input and output of the $k$th batch data respectively. BiLipREN has $3$ internal states and $30$ neurons, and the dynamic orthogonal layer has $30$ internal states. sequences. 
\begin{figure}[!bt]
\centering
    \includegraphics[width=\linewidth]{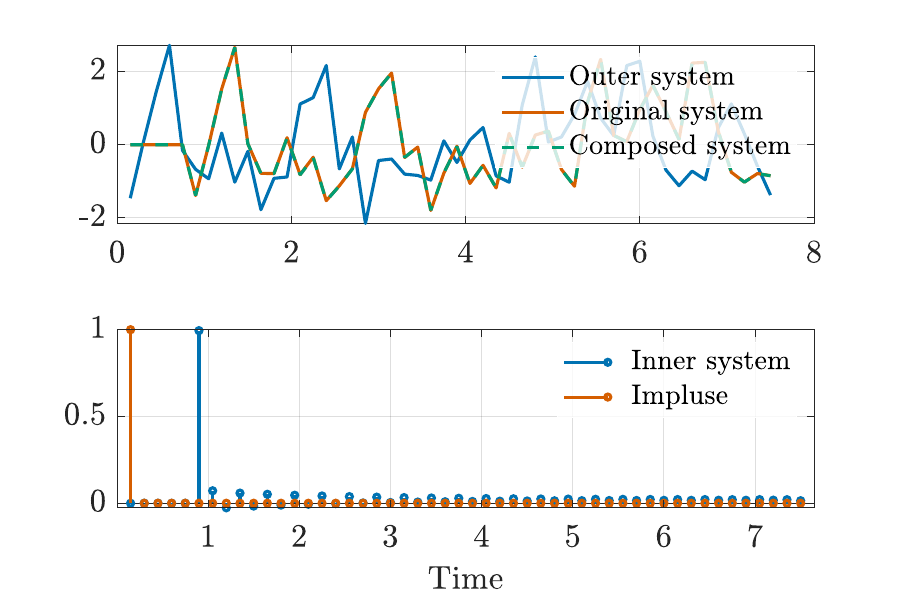}
    \caption{Open loop simulation of the outer system, the original system and the composed system under the Gaussian noise (Top), the impulse response of the inner system (Bottom).}\label{fig:fact} 
\end{figure}
\subsection{Robust Inversion}
In the second experiment, we learn a robustly-invertible model of a nonlinear mechanical system with four coupled mass spring dampers with nonlinear spring characteristic, adapted from \cite{revay2020convex}. A schematic is shown in Figure \ref{fig:msd}. Since the linearized system is easily verified to be minimum phase phase, we attempt to learn a nonlinear invertible model, such that the system input can be robustly reconstructed from the output, even with the addition of measurement noise and uncertain initial conditions.

The nonlinearity in the system introduced through the spring's piece-wise linear force profile
\[
F_i(d) = k_i \Gamma(d), \quad \Gamma(d)=
\begin{cases}
d + 0.75, & d \leq -1, \\
0.25\,d, & -1 < d < 1, \\
d - 0.75, & d \geq 1.
\end{cases}
\]
where $k_i$ is the spring constant for the $i$th spring and $d$ is the displacement between the carts. We excite the first cart with a piece-wise constant input signal that changes value after an interval distributed uniformly in $[0,\tau]$, and takes values that are normally distributed with standard deviation $\sigma$. We measure the position of the last cart, and measurement noise is modeled as zero-mean Gaussian noise with a signal-to-noise ratio of approximately $30$dB. The system is simulated for $20$s and sampled at $50\mathrm{Hz}$ to generate $1000$ data points with an input signal characterized by $\tau = 20$s and $\sigma = 3$N. 

We use a 4 layer $(0.1,5)$-biLipschitz model, and each layer is composed by a biLipREN with $50$ internal states and $60$ neurons and a static orthogonal layer to fit the system. We compare the train and test error  to a contracting REN with same number of internal states and neurons. The fitting result from one trajectories is shown in Figure \ref{fig:fitting}. Notice that the training error and test error using the biLipREN is larger than the contracting REN, as shown in the Figure \ref{fig:error}. This is to be expected, since the biLipREN represents a restricted model class relative to the contracting REN, but the effect is mild. The zoomed in section shows the fitting details, the contracting REN smoothly fits the output sequence while the biLipREN has a slight shift. The fitting error is evaluated by normalized simulation error (NSE) between the training data and the output of the network \begin{equation}\nonumber
    \mathrm{NSE}= \frac{\|\G(u_k)-y_k\|}{\|y_k\|}
\end{equation}
where $u_k$ and $y_k$ are the input and output of the $k$th batch data respectively. However, biLipREN can robustly recover the input sequence with the presence of noise in the output and difference in initial state. In contrast, the contracting REN does not have any guarantee that an inverse exists (i.e. is well-posed), if it exists that it is stable or Lipschitz. One of the reconstructed input sequences from biLipREN is shown in Figure \ref{fig:inverse}. 

 We compare reconstruction error to the theoretical bound calculated as in \eqref{eq:error-bound}. We use adversarial training to find the worst case of the actual error, by setting the input sequence $u$, perturbed input sequence $u_p$ and the initial condition $b$ of the inverse process as training parameters. We constrain the norm of the difference in initial value within $0.1$ and the norm of perturbation within $1$ by projected gradient descent. For the calculation of theoretical bound of our trained model, the maximum contracting rate $\bar \alpha =0.9$, the overshoot $\kappa=\sqrt{\frac{\bar \sigma}{\underline{\sigma}}}\approx3.03$, where $\bar \sigma$ and $\underline{\sigma}$ are the maximum and minimum singular value of $P$ in \eqref{eq:passivity} respectively, see \cite{revay2024recurrent}. The bi-Lipschitz bound of the output layer $h_\theta$ is found by calculating the maximum and minimum singular value of the Jacobian matrix with respect to the states and inputs, which is $(0.71,6.54)$. The time-averaged reconstruction error:
 \[
\frac{1}{T} \|\G_b^{-1}(\G_a(u+\delta_u)-u\|_T
 \]is plotted as a function of $T$ in Figure \ref{fig:bound} along with its bound \eqref{eq:error-bound}. It can be seen that the theoretical prediction is verified.

\begin{figure}[!bt]
\centering
    \includegraphics[width=\linewidth]{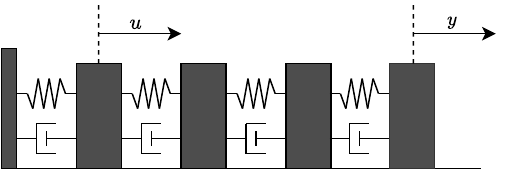}
    \caption{Four coupled nonlinear mass spring dampers.}\label{fig:msd} 
\end{figure}
\begin{figure}[!bt]
\centering
    \includegraphics[width=\linewidth]{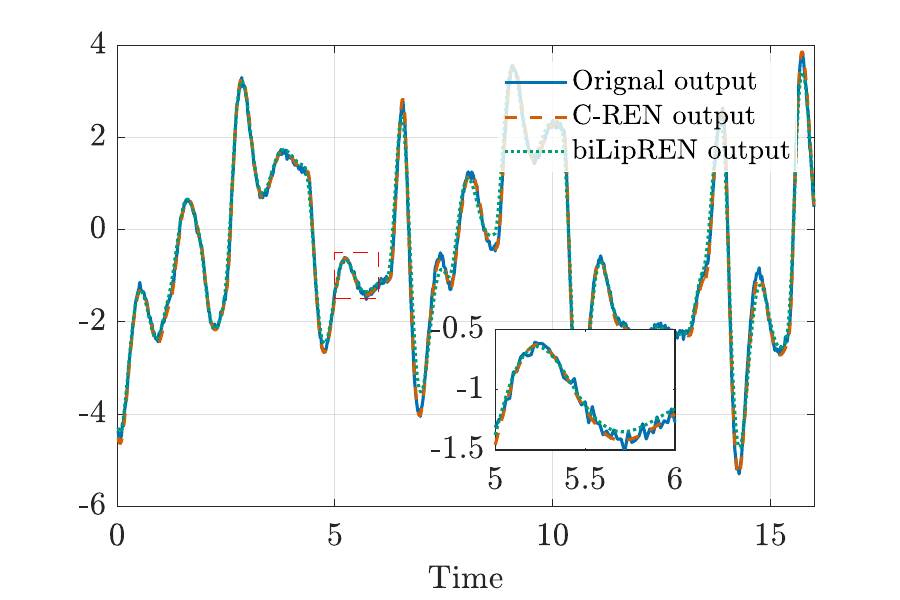}
    \caption{One of output sequences using a biLipREN and a contracting REN to fit a 4 coupled mass spring damper.}\label{fig:fitting} 
\end{figure}
\begin{figure}[!bt]
\centering
    \includegraphics[width=\linewidth]{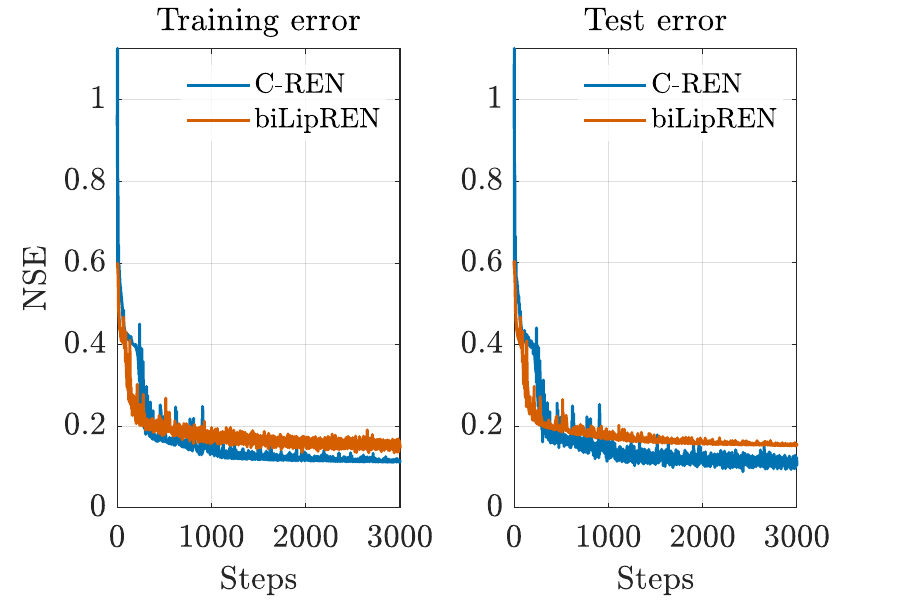}
    \caption{Normalized training error (left) and test error (right) of biLipREN and contracting REN.}\label{fig:error} 
\end{figure}
\begin{figure}[!bt]
\centering
    \includegraphics[width=\linewidth]{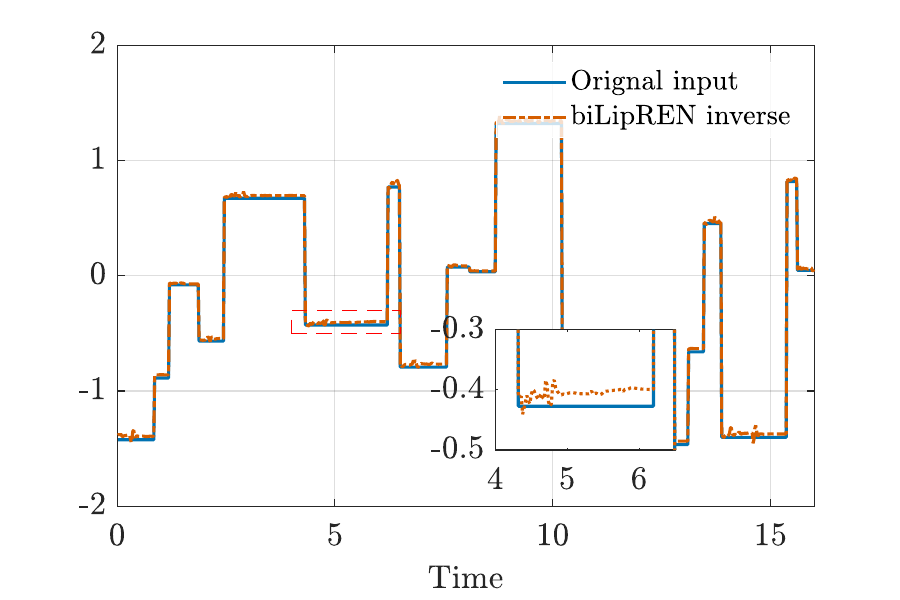}
    \caption{Reconstructed input sequence from output sequence in Figure \ref{fig:fitting} by biLipREN with measurement noise .}\label{fig:inverse} 
\end{figure}
\begin{figure}[!bt]
\centering
    \includegraphics[width=\linewidth]{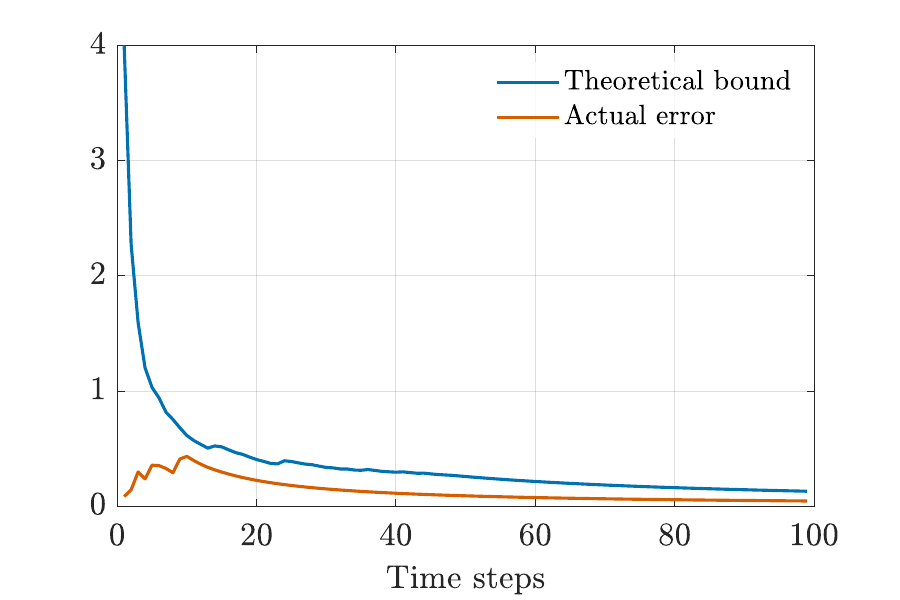}
    \caption{Time average of the actual error and theoretical bound}\label{fig:bound} 
\end{figure}

\section{Conclusions}
In this paper, we have formally defined robust invertibility for nonlinear systems, in terms of contraction and bi-Lipschitzness. We use strong input-output monotonicity to describe a bi-Lipschitz operator. By composing with static orthogonal layers, we obtain more general bi-Lipschitz dynamic models with guaranteed robust invertibility. By composing with dynamic orthogonal layers, we obtain a bi-Lipschitz dynamic models with a non-causal inverse.

Building on the notion of robust invertibility, we construct a bi-Lipschitz recurrent equilibrium network (biLipREN), which has a robust inverse and direct parameterization enabling learning via unconstrained optimization. We illustrate the approach with a nonlinear inner-outer factorization and a robust inversion experiment.







\bibliographystyle{IEEEtran}
\bibliography{ref.bib}
\end{document}